\newtheorem{theorem}{Theorem}[section]
\newtheorem*{conjecture}{Conjecture}
\newcommand{\fp}{\operatorname{fp}}
\title{Slowly synchronizing DFAs of 7 states and maximal slowly synchronizing DFAs}
\author{Michiel de Bondt}
\begin{document}

\maketitle

\begin{abstract}
We compute all synchronizing DFAs with $7$ states and synchronization length $\ge 29$.

Furthermore, we compute alphabet size ranges for maximal, minimal and semi-minimal 
synchronizing DFAs with up to $7$ states.
\end{abstract}

\section{Introduction}

A \emph{Deterministic Finite Automaton (DFA)} consists of a finite set of
so-called \emph{states}, and a finite alphabet of so-called 
\emph{transition symbols}. The transition symbols are maps from the state set 
to itself. A DFA also has a begin state and a set of final states, but those
are irrelevant for this paper.

Let $Q$ and $\Sigma$ be the state set and the alphabet of a DFA. Then
the maps of the transition symbols combine to a \emph{transition function}
from $Q \times \Sigma$ to $\Sigma$. We denote this function by $\cdot$.
$\cdot$ is left-associative, and we will omit it mostly. We additionally 
define $\cdot: 2^Q \times \Sigma \rightarrow 2^Q$, namely
by $S x = \bigcup_{s \in S} \{sx\}$.

With $*$ being the Kleene star, $\Sigma^{*}$ is the sets of al words over
$\Sigma$, i.e.\@ all sequences of zero or more symbols of $\Sigma$. 
Each such word can be seen as either the empty word, or a symbol followed
by another word. With respect to this structural definition, we 
define $qw$ inductively as follows for states $q \in Q$, subsets 
$S \subseteq Q$, and words $w \in \Sigma^{*}$:
\begin{align*}
q \lambda &= q & q (x w) &= (qx)w & S \lambda &= S & S (x w) &= (Sx)w
\end{align*}
Here, $\lambda$ is the empty word, $x$ is the first letter of the word 
$xw$ and $w$ is the rest of $xw$. 

We say that a DFA with state set $Q$ and alphabet $\Sigma$ is
\emph{synchronizing} (in $l$ steps), if there exists a $w \in \Sigma^{*}$ 
(of length $l$), such that $Qw$ has size $1$.
\begin{wrapfigure}{r}{4cm}
\vspace*{-20pt}
\begin{tikzpicture}
\tikzstyle{nodestyle}=[draw,fill=white,circle,inner sep=0pt,minimum width=0.5cm]
\foreach[count=\n] \a in {240,180,...,-60} {
  \node[nodestyle] at (\a:1.1) (\n) {};
}
\draw (1) edge[->] node[auto,inner sep=1pt] {$a$} (2) 
  (2) edge[->] node[auto,inner sep=1pt] {$a$} (3) 
  (3) edge[->] node[auto,inner sep=1pt] {$a,b$} (4)
  (4) edge[->] node[auto,inner sep=1pt] {$a$} (5) 
  (5) edge[->] node[auto,inner sep=1pt] {$a$} (6);
\foreach \n in {1,2,4,5,6} {
  \draw (\n) edge[in=-60*\n-100,out=-60*\n-20,looseness=7,->] (\n);
  \draw (-60*\n-60:1.6) node {$b$};
}
\draw (-0.1,-0.9526) node {$\cdot$} (0.0,-0.9526) node {$\cdot$} (0.1,-0.9526) node {$\cdot$};
\end{tikzpicture}
\vspace*{-40pt}
\end{wrapfigure}
If a DFA is synchronizing
in $l$ steps, but not in fewer than $l$ steps, then we call $l$ the 
\emph{synchronization length}. A conjecture by \v{C}ern\'y in 1964 \cite{cerny}
is that for a DFA with $n$ states, the largest possible synchronization length 
is $(n-1)^2$. \v{C}ern\'y constructed a series of DFAs which reach this 
synchronization length, which is depicted on the right.
The unique shortest synchronizing word of this DFA is $b(a^{n-1}b)^{n-2}$.

In section \ref{dfa7}, we will discuss our search for DFAs with 7 states and large 
synchronization lengths. This search extends results in \cite{KKS16} and \cite{BDZ18}.
To obtain a more efficient search algorithm, we improved the pruning of \cite{BDZ18}.

In section \ref{max}, we will define several types of synchronizing DFAs, and we
discuss the search for DFAs with up to 7 states of these types. Some of these types
were already discussed in \cite{BDZ16}, in which the search has already been done
for DFAs with synchronization length $(n-1)^2$.

\section{Slowly synchronizing DFAs with 7 states} \label{dfa7}

In \cite{BDZ18}, we computed all basic DFAs with 7 states with synchronization
length at least $31$. This yielded only $22$ DFAs up to reordering states. 
Using better pruning, but pruning which only works for DFAs and not for 
PFAs in general, we extended this computation to synchronization 
length at least $29$, yielding no less than $1850647$ DFAs up to reordering 
states. The results are given below. 
Similar computations for smaller state sets can be found in \cite{BDZ16}.
\begin{center}
  \renewcommand{\arraystretch}{1.4}
  \begin{tabular}{|r|rrrrrrrr|}
    \hline
    alph.\@ & sync.\@ & sync.\@ & sync.\@ & sync.\@ & sync.\@ & sync.\@ & sync.\@ & sync.\@ \\[-5pt]
    size & 36 & 35 & 34 & 33 & 32 & 31 & 30 & 29 \\
    \hline
    1 & & & & & & & & \\[-5pt]
    2 & 1 & & & & 3 & 3 & 13 & 39 \\[-5pt]
    3 & & & & & 3 & 8 & 44 & 373 \\[-5pt]
    4 & & & & & & 4 & 90 & 1902 \\[-5pt]
    5 & & & & & & & 148 & 7416 \\[-5pt]
    6 & & & & & & & 194 & 23486 \\[-5pt]
    7 & & & & & & & 183 & 60544 \\[-5pt]
    8 & & & & & & & 113 & 126448 \\[-5pt]
    9 & & & & & & & 44 & 213970 \\[-5pt]
    10 & & & & & & & 10 & 294678 \\[-5pt]
    11 & & & & & & & 1 & 331780 \\[-5pt]
    12 & & & & & & & & 306068 \\[-5pt]
    13 & & & & & & & & 231142 \\[-5pt]
    14 & & & & & & & & 142256 \\[-5pt]
    15 & & & & & & & & 70713 \\[-5pt]
    16 & & & & & & & & 27980 \\[-5pt]
    17 & & & & & & & & 8620 \\[-5pt]
    18 & & & & & & & & 2000 \\[-5pt]
    19 & & & & & & & & 332 \\[-5pt]
    20 & & & & & & & & 36 \\[-5pt]
    21 & & & & & & & & 2 \\
    \hline
    total & 1 & 0 & 0 & 0 & 6 & 15 & 840 & 1849785 \\
    \hline
  \end{tabular}
\end{center}
The computation took $8.5$ CPU-years on a heterogeneous cluster, and the estimated
single-thread time was about $5$ years. The computation was performed by borrowing
CPU-cyles from the science department of our university, especially the theoretical
chemistry group.

There exists a basic DFA with $7$ states, $39$ ($53$) symbols, and synchronization length
$28$ ($27$). This shows that enumeration of basic DFA with $7$ states and synchronization 
length $28$ ($27$) is not feasible. In the next section, we suggest computations which can
be performed in practice instead.
 
As mentioned above, the algorithm differs from that in \cite{BDZ18} 
in that the pruning has been improved. The pruning is done by finding an 
upper bound of the synchronization length of all synchronizing extensions 
$\mathcal{B}$ of a DFA $\mathcal{A}$. Here, $\mathcal{B}$ is an extension 
of $\mathcal{A}$ if $\mathcal{A}$ and $\mathcal{B}$ have the same state
sets, and for every symbol $a$ of $\mathcal{A}$, there exists a symbol $a'$ of 
$\mathcal{B}$ which corresponds to $a$ as a (partial) mapping of states.

The pruning in \cite{BDZ18} comes in three variants, with three upper bound
$L$, $L'$, and $L''$. The first variant is the easiest.
\begin{enumerate}
	\item[(1)] Determine the size $|S|$ of a smallest reachable set $S$. 
	Let $m$ be the minimal distance from $Q$ to a set of size $|S|$.
	\item[(2)] For each $k\leq |S|$, partition the collection of irreducible sets of size $k$ into strongly
	connected components. Let $m_k$ be the number of components plus the sum of their diameters.
	\item[(3)] For each reducible set $R$ of size $k\leq |S|$, find the length $l_R$ of its shortest
	reduction word. Let $l_k$ be the maximum of these lengths.
	\item[(4)] Now note that a synchronizing extension of $\mathcal{A}$ will have a synchronizing
	word of length at most
	\[ L \; = \; \sum_{k=2}^{|S|}(m_k+l_k) + m. \]
\end{enumerate}
The second variant improves the first variant as follows. Let $M$ be the maximum 
distance from $Q$ to a set of size $|S|$. Partition the irreducible sets of size $|S|$ 
which can be reached from $Q$ into strongly connected components, and let $c$ 
be the number of components plus the sum of their diameters. Then
a synchronizing extension of $\mathcal{A}$ will have a synchronizing
word of length at most
\[ L' \; = \; \sum_{k=2}^{|S|}(m_k+l_k) - c + 1 + M. \]
The third variant is the hardest variant. We take the upper bound $L''$ 
equal to $L''_Q$, and we define inductively an upper bound $L''_R$
for the length of the the shortest synchronizing word for 
a reducible subset $R$, and an upper bound $L''_k$ for the maximum
length of the shortest synchronizing word for any subset of size $k$.
Define $S_R$, $m_R$, $M_R$ and $c_R$ as $S$, $m$, $M$ and $c$ respectively, 
but with $Q$ replaced by $R$.
\begin{align*}
L''_R \; &= \; m_R && \mbox{if $|S_R| = 1$,} \\
L''_R \; &= \; \min\{ L''_{|S_R|} - c_R + 1 + M_R, L''_{|R|-1} + l_R \} 
                   && \mbox{if $|S_R| > 1$,} \\
L''_1 \; &= \; 0, \\
L''_k \; &= \; m_k + \max\{ L''_{k-1}, L''_R \mid R 
\mbox{ is reducible and } |R| = k \} && \mbox{if $k > 1$.}
\end{align*}

We improve the three upper bounds as follows.
\begin{itemize}

\item In $L$, we improve $m_k$ for each $k \le |S|$;

\item In $L'$, we improve $M$, $m_{|S|} - c$ and $m_k$ for each $k < |S|$;

\item In $L''$, we improve $M_R$ and $m_{|S_R|} - c_R$ for each $R \subseteq Q$
and $m_k$ for each $k < |S|$.

\end{itemize}
Since $M = M_Q$, $S = S_Q$ and $c = c_Q$, it suffices to improve $m_k$ for each 
$k \le |S|$, and $M_R$ and $m_{|S_R|} - c_R$ for each $R \subseteq Q$. We must
preserve the following.
\begin{enumerate}

\item[($\alpha$)] Let $k \le |S|$. For every synchronizing extension $\mathcal{B}$ 
of $\mathcal{A}$, the shortest path from any subset of size
$k$ to a subset of size $\le k$ which is either reducible in $\mathcal{A}$
or of size $< k$, has length at most $m_k$.

\item[($\beta$)] Let $R \subseteq Q$ be reducible in $\mathcal{A}$. Notice that 
$|S_R|$ is the size of the smallest set which is reachable from $R$ in 
$\mathcal{A}$. For every synchronizing extension $\mathcal{B}$ of 
$\mathcal{A}$, the shortest path from $R$ to a subset of size $\le |S_R|$ 
which is reducible in $\mathcal{A}$ or of size $< |S_R|$, has 
length at most $M_R + 1 + m_{|S_R|} - c_R$.

\end{enumerate}
The first improvement is obtained by realizing that for subsets of size $k$
in ($\alpha$) and of size $|S_R|$ in ($\beta$) which are not reducible in 
$\mathcal{A}$, the only thing that matters is that they contain a pair, 
from which there exists a short path in $\mathcal{B}$ to a subset of size 
$\le 2$ which is either reducible in $\mathcal{A}$ or of size $< 2$.

For $m_k$, the improvement is as follows.
Let $\sigma$ be a strongly connected component of irreducible subsets 
of size $k$ of the power automaton of $\mathcal{A}$. The purpose of 
$m_k$ is to estimate the number of subsets of $\sigma$ in a synchronization 
path of the power automaton of $\mathcal{B}$, which is done by the diameter
of $\sigma$, i.e.\@
$$
\max \Big\{ \max \Big\{ d(S_1,S_2) \,\Big|\, S_2 \in \sigma \Big\} 
\,\Big|\, S_1 \in \sigma \Big\}
$$
where $d(S_1,S_2)$ is the number of steps required to get from $S_1$ to $S_2$
in $\mathcal{A}$. This can be improved to
$$
\max \Big\{ \max \Big\{ \min \Big\{ d(S_1,T) \,\Big|\, T \supseteq P \Big\} \,\Big|\,
\mbox{ \small \begin{tabular}{@{}c@{}} $P$ is a pair contained in \\ 
in some subset $S_2$ of $\sigma$ \end{tabular} } \Big\} \,\Big|\, S_1 \in \sigma \Big\}
$$
The purpose of $c_R$ is to exclude some strongly connected components which 
are considered in $m_{|S_R|}$, which can be done in the same way as before.

For $M_R$, the improvement is as follows. Let $\tau$ be the collection of subset of 
size $|S_R|$ which are reachable from $R$ in $\mathcal{A}$. Then we can
improve
$$
\max \Big\{ d(R,T) \,\Big|\, T \in \tau \Big\} 
$$
to
$$
\max \Big\{ \min \Big\{ d(R,T) \,\Big|\, T \supseteq P \Big\} \,\Big|\,
\mbox{$P$ is a pair contained in some subset of $\tau$}\Big\} 
$$

For the second improvement, we use ideas of \cite{KS14} and \cite{KKS16}.
Let $k \ge 2$, and $S_1, S_2, \ldots, S_{\ell}$ be distinct $k$-subsets and 
$P_1, P_2, \ldots, P_{\ell}$ be distinct pairs of states. We say that 
$$
(S_1,P_1), (S_2,P_2), \ldots, (S_{\ell},P_{\ell})
$$
is a \emph{Frankl-Pin sequence}, if
\begin{enumerate}

\item[(i)] $P_i \subseteq S_i$ for all $i$;

\item[(ii)] $P_j \nsubseteq S_i$ for all $i$ and all $j < i$.

\end{enumerate}
Let $\sigma$ be a collection of $k$-subsets of states and let $\pi$ be
a collection of pairs of states. Denote by $\fp(\rho)$ ($\fp(\sigma,\pi)$) 
the length of the longest Frankl-Pin sequence
$(S_1,P_1), (S_2,P_2), \ldots, (S_{\ell},P_{\ell})$, with
$S_i \in \rho$ (and $P_i \in \pi$) for all $i$. 

\begin{theorem} \label{fpks}
Let $\rho_k$ be the collection of $k$ subsets of states which are
reducible in $\mathcal{A}$. 
\begin{enumerate}

\item[\upshape(i)] Let $T$ be a subset of states of size $k$. In $\mathcal{B}$,
it takes at most
$$
\binom{n-k+2}{2} - \fp(\rho_k,\rho_2)
$$
steps to get from $T$ to a subset of size $\le k$ which is either reducible 
in $\mathcal{A}$ or of size $< k$.

\item[\upshape(ii)] Let $\tau$ be a collection of subsets of size $k$. Then 
there exists a $T \in \tau$, such that in $\mathcal{B}$, it takes at most
$$
1 - \fp(\tau) + \binom{n-k+2}{2} - \fp(\rho_k,\rho_2)
$$
steps to get from $T$ to a subset of size $\le k$ which is either reducible 
in $\mathcal{A}$ or of size $< k$.

\end{enumerate}
\end{theorem}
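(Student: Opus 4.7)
The plan is to build Frankl-Pin sequences along the $\mathcal{B}$-trajectory of $T$, prepend a maximal FP sequence in $(\rho_k,\rho_2)$ (and, for part (ii), one in $\tau$ as well), and then invoke the universal bound $\binom{n-k+2}{2}$ on the length of any Frankl-Pin sequence of $k$-subsets, which is the standard result underlying \cite{KS14,KKS16}.

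For (i), I would pick a shortest word $w=w_1\cdots w_\ell$ in $\mathcal{B}$ driving $T$ to a subset of size $<k$ or in $\rho_k$, and set $T_i=Tw_1\cdots w_i$. The minimality of $w$ forces $|T_i|=k$ and $T_i\notin\rho_k$ for $i<\ell$, and makes the $T_i$'s pairwise distinct (an early repeat would let us shorten $w$). Next I would produce distinct pairs $P_i\subseteq T_i$ turning $(T_0,P_0),\ldots,(T_{\ell-1},P_{\ell-1})$ into a Frankl-Pin sequence. I would work backwards from the last step: choose $P_{\ell-1}\subseteq T_{\ell-1}$ tied to the reduction at step $\ell$ (a pair that merges under $w_\ell$ when $|T_\ell|<k$, or one mapped under $w_\ell$ to a $\rho_2$-pair of $T_\ell$ when $T_\ell\in\rho_k$), and then for $i<\ell-1$ exhibit $P_i\subseteq T_i$ not contained in any later $T_{i+1},\ldots,T_{\ell-1}$. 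The existence of such a pair at each step—equivalently, Frankl-Pin condition (ii)—is precisely what the minimality of $w$ buys; a naive forward push $P_i=P_0w_1\cdots w_i$ is easily seen to violate (ii).

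Then I would prepend a maximal $(\rho_k,\rho_2)$-FP sequence $(S_1,Q_1),\ldots,(S_m,Q_m)$ of length $m=\fp(\rho_k,\rho_2)$. The concatenation is a Frankl-Pin sequence: the $S_i\in\rho_k$ are distinct from the $T_j\notin\rho_k$, and no $Q_i\in\rho_2$ can sit inside any $T_j$ (otherwise the $\mathcal{A}$-word reducing $Q_i$ would reduce $T_j$, forcing $T_j\in\rho_k$), which also yields $Q_i\neq P_j$ throughout. The universal bound then gives $\ell+m\le\binom{n-k+2}{2}$, which is exactly (i).

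For (ii), assume WLOG $\tau\cap\rho_k=\emptyset$ (otherwise a member of $\tau$ trivially satisfies the conclusion). Take a maximal FP sequence $(S_1,R_1),\ldots,(S_{\fp(\tau)},R_{\fp(\tau)})$ in $\tau$, set $T=S_{\fp(\tau)}$, and run the construction of (i) from this $T$, with the additional requirement that the initial trajectory pair be $R_{\fp(\tau)}$. Concatenate the $(\rho_k,\rho_2)$-sequence, the first $\fp(\tau)-1$ entries of the $\tau$-sequence, and the trajectory FP sequence, counting the overlap $S_{\fp(\tau)}=T_0$ only once. The universal bound applied to the resulting Frankl-Pin sequence of length $\fp(\rho_k,\rho_2)+(\fp(\tau)-1)+\ell$ rearranges to the stated bound; the $+1$ is exactly the single-counting of the overlap.

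The main obstacle is the backward construction of the trajectory pairs $P_i$ and the verification of Frankl-Pin condition (ii) from minimality of $w$: the condition is not automatic for arbitrary $k$-trajectories, and the argument must use that any $P_j\subseteq T_i$ with $j<i$ would permit shortening $w$. A secondary difficulty in (ii) is fitting $R_{\fp(\tau)}$ as the first pair of the trajectory sequence and reconciling possible clashes between prepended $\tau$-subsets and trajectory subsets, which should also be handled by the flexibility of the backward construction together with the choice of $\tau$-sequence.
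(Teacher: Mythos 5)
Your overall strategy---turn the $\mathcal{B}$-trajectory of $T$ into a Frankl--Pin sequence, prepend a maximal $(\rho_k,\rho_2)$-sequence (whose pairs, being reducible in $\mathcal{A}$, cannot lie in the irreducible trajectory sets), and invoke the Frankl bound $\binom{n-k+2}{2}$---is exactly the approach the paper delegates to \cite{KS14} and \cite{KKS16}. But the step you yourself flag as the main obstacle is where the proposal breaks. Writing $w=a_1\cdots a_\ell$, you order the trajectory sequence forward in time, $(T_0,P_0),\dots,(T_{\ell-1},P_{\ell-1})$, and assert that minimality of $w$ yields, for each $i$, a pair $P_i\subseteq T_i$ contained in no later $T_{i'}$. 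Minimality gives no such thing: nothing prevents every pair of $T_i$ from reappearing in some later trajectory set (e.g.\ $T_i=\{1,2,3\}$ followed by sets containing $\{1,2\}$, $\{1,3\}$, $\{2,3\}$ respectively), and the shortcut argument you allude to runs in the opposite direction. The correct construction takes $P_{\ell-1}$ to be the pair of $T_{\ell-1}$ that is merged by $a_\ell$ (or mapped by $a_\ell$ onto a pair of $\rho_2$ when $T_\ell$ is reducible), pulls it back along the trajectory so that $P_i a_{i+1}=P_{i+1}$, and lists the sequence in \emph{reverse} chronological order $(T_{\ell-1},P_{\ell-1}),\dots,(T_0,P_0)$. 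Then condition (ii) of the Frankl--Pin definition reads $P_a\nsubseteq T_b$ for $a>b$, and this \emph{is} what minimality buys: if $P_a\subseteq T_b$ with $a>b$, then $a_1\cdots a_b\,a_{a+1}\cdots a_\ell$ is a shorter word taking $T$ to a set that is smaller or reducible in $\mathcal{A}$.

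For (ii) there is a second gap. Taking $T=S_{\fp(\tau)}$ and gluing the first $\fp(\tau)-1$ entries of a maximal $\tau$-sequence onto the trajectory sequence requires $R_j\nsubseteq T_i$ for all $i\ge 1$ as well, and maximality of the $\tau$-sequence only controls containment in $T_0=S_{\fp(\tau)}$; nothing ties the pairs $R_j$ to the later trajectory sets. The standard fix is to choose $T\in\tau$ \emph{minimizing} the number $\ell$ of steps to a reducible-or-smaller set. Then for every $i\ge1$ the pullback pair $P_i$ is collapsed (or made reducible) by a word of length $\ell-i<\ell$, so $P_i$ is contained in no member of $\tau$ at all; hence one may drop $(T_0,P_0)$ and append an entire maximal $\tau$-sequence after $(T_{\ell-1},P_{\ell-1}),\dots,(T_1,P_1)$, obtaining a Frankl--Pin sequence of length $\fp(\rho_k,\rho_2)+(\ell-1)+\fp(\tau)$, which rearranges to the stated bound. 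Your ``count the overlap once'' accounting produces the right number, but the sequence it is attached to is not shown to satisfy the Frankl--Pin condition.
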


\begin{proof}
The proof of (i) is essentially that of \cite[Theorem 2]{KS14} and 
\cite[Theorem 1]{KKS16}, and the proof of (ii) is similar.
\end{proof}

Notice that Theorem \ref{fpks} (i) is a special case of Theorem \ref{fpks} (ii), 
namely the case where $|\tau| = 1$. On account of Theorem \ref{fpks} (i), 
we can improve $m_k$ to 
$$
\min \Big\{m_k, \binom{n-k+2}{2} - \fp(\rho_k,\rho_2)\Big\}
$$
On account of Theorem \ref{fpks} (ii), we can improve $m_k - c_R$ with $k = |S_r|$ to
$$
\min \Big\{m_k - c_R, -\fp(\tau) + \binom{n-k+2}{2} - \fp(\rho_k,\rho_2)\Big\}
$$
There is however one problem, namely computing $\fp(\tau)$ and $\fp(\rho_k,\rho_2)$.
We do not compute $\fp(\tau)$ and $\fp(\rho_k,\rho_2)$, but take the lengths of
Frankl-Pin sequences which are not necessarily maximal. This makes the 
improvements of $m_k$ and $m_{S_R} - c_R$ worse, but they remain valid. 

We construct the Frankl-Pin sequences with length $\le \fp(\sigma,\pi)$ by a 
greedy approach. We take a pair $P$ of $\pi$ which is contained in the fewest 
subsets of $\sigma$. We make $\sigma'$ from $\sigma$ by removing all subsets which 
contain $P$. We compute a lower bound $f$ of $\fp(\sigma',\pi \setminus \{P\})$ 
recursively. If $\sigma' \ne \sigma$, then the Frankl-Pin sequence with length
$f$ can be extended at the front, and $1 + f$ is a lower bound of 
$\fp(\sigma,\pi)$. If $\sigma' = \sigma$, then $f$ is a lower bound of 
$\fp(\sigma,\pi)$.

A DFA is \emph{transitive} or \emph{strongly connected} if one can get from any 
state to any other state.
A synchronizing DFA is \emph{minimal} or \emph{irreducibly synchronizing}
if it becomes nonsynchronizing after removing any symbol. 
The authors of \cite{KS14} and \cite{KKS16} count the synchronizing automata 
differently, namely they count only transitive minimal synchronizing DFAs up to
reordering states. Below, we do this as well for $7$ states. 
\begin{center}
  \renewcommand{\arraystretch}{1.4}
  \begin{tabular}{|r|rrrrrrrr|}
    \hline
    alph.\@ & sync.\@ & sync.\@ & sync.\@ & sync.\@ & sync.\@ & sync.\@ & sync.\@ & sync.\@ \\[-5pt]
    size & 36 & 35 & 34 & 33 & 32 & 31 & 30 & 29 \\
    \hline
    1 & & & & & & & & \\[-5pt]
    2 & 1 & & & & 3 & 3 & 13 & 39 \\[-5pt]
    3 & & & & & & 2 & 29 & 257 \\[-5pt]
    4 & & & & & & & 8 & 145 \\[-5pt]
    5 & & & & & & & 4 & 55 \\[-5pt]
    6 & & & & & & & 1 & 4 \\
    \hline
    total & 1 & 0 & 0 & 0 & 3 & 5 & 55 & 500 \\
    \hline
  \end{tabular}
\end{center}
Actually, all slowly synchronizing minimal DFAs with $7$ states are counted 
above, because nontransitive synchronizing DFAs with $7$ states have 
synchronization length at most $26$.

\begin{theorem}
If the Cerny conjecture is true for less than $n \ge 2$ states, then the maximum length 
of the synchronizing word of a nontransitive synchronizing DFA with $n$ states
is 
$$
\max\{\tfrac12 n(n-1), (n-2)^2 + 1\}
$$
which is $(n-2)^2 + 2$ if $n = 3$ or $n = 4$, and $(n-2)^2 + 1$ otherwise.
\end{theorem}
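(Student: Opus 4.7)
The plan is to exploit the nontransitive structure. A synchronizing DFA whose underlying digraph is not strongly connected has \emph{exactly one} sink strongly connected component $A$: two disjoint sink components would each be invariant under every letter, forcing $|Qw| \ge 2$ for every word $w$. Let $k = |A|$; then $1 \le k \le n-1$, and every state of $Q$ can reach $A$. I let $w$ be a shortest synchronizing word and split it as $w = uv$, where $u$ is the shortest prefix with $Qu \subseteq A$. Since $A$ is invariant, $v$ operates inside the sub-automaton on $A$, which is itself synchronizing (any synchronizing word for $\mathcal{A}$ also works on $A$) and has $k < n$ states; the assumed \v{C}ern\'y bound for fewer than $n$ states then gives $|v| \le (k-1)^2$.

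For $|u|$, I would collapse $A$ to a single fixed state $*$ to obtain an auxiliary DFA $\mathcal{A}'$ on $m := n-k+1$ states in which $*$ is a zero; its shortest synchronizing word has length exactly $|u|$. I would then invoke the classical bound that a synchronizing DFA with a zero on $m$ states has synchronization length at most $\binom{m}{2}$. A short greedy proof is available: for any subset $S \ni *$ with $|S| = j \ge 2$, a pigeonhole on the BFS-distances from $*$ produces some $q \in S \setminus \{*\}$ with $d(q,*) \le m-j+1$, and applying its approach word of length $d(q,*)$ sends both $q$ and $*$ to $*$, strictly shrinking $|S|$ while preserving $* \in S$; summing over the $m-1$ shrinkages gives $\sum_{j=2}^{m}(m-j+1) = \binom{m}{2}$, so $|u| \le \binom{n-k+1}{2}$.

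Combining the two estimates yields $|w| \le \binom{n-k+1}{2} + (k-1)^2$. As a function of $k$ this is convex, so its maximum over $1 \le k \le n-1$ is attained at $k=1$ or $k=n-1$, producing $\max\bigl\{\tfrac12 n(n-1),\,(n-2)^2+1\bigr\}$. For the matching lower bound, the \v{C}ern\'y automaton on $n-1$ states with an extra source state absorbed by a single letter attains $(n-2)^2+1$ for every $n \ge 2$, and for $n = 3, 4$ the larger value $\tfrac12 n(n-1) = (n-2)^2+2$ is realised by a \v{C}ern\'y-with-zero construction (for $n=4$, on $\{0,1,2,3\}$ one can take $a$ the $3$-cycle on $\{1,2,3\}$ fixing $0$, and $b$ sending $1 \mapsto 2$, $2 \mapsto 1$, $3 \mapsto 0$ and fixing $0$, whose shortest synchronizing word $babaab$ has length $6$). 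The main technical step is the zero-automaton bound $\binom{m}{2}$, handled by the pigeonhole on BFS-distances above; the rest is a routine convexity check and the small-case constructions.
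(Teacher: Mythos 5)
Your proof is correct and follows essentially the same route as the paper's: decompose via the unique sink component (the paper's ``$m$ states which can be reached from every other state''), apply the \v{C}ern\'y hypothesis there, bound the funnelling phase by $1+2+\cdots+(n-m)=\binom{n-m+1}{2}$, and finish by convexity; you merely make explicit the pigeonhole justification of the funnelling bound and the extremal constructions that the paper dismisses as straightforward. One small wobble: the prefix $u$ of a \emph{shortest} synchronizing word need not be a shortest word driving $Q$ into $A$, so your zero-automaton argument does not literally bound $|u|$; the clean formulation is that concatenating a shortest funnelling word $u^*$ with a synchronizer $v^*$ of $Qu^*$ inside $A$ yields a synchronizing word of length $\le\binom{n-k+1}{2}+(k-1)^2$, which is what upper-bounds $|w|$.
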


\begin{proof}
Let $\mathcal{A}$ be a synchronizing DFA with $n$ states, and suppose that 
$\mathcal{A}$ has exacly $m$ states which can be reached from every other state.
Suppose that the Cerny conjecture holds for $m$ states. Then these $m$ states
can be synchronized in at most $(m-1)^2$ steps. It takes
$$
1 + 2 + \cdots + n-m = \tfrac12(n-m+1)(n-m)
$$
steps to reduce the set of all $n$ states to those $m$ states, so
the synchronization length is at most
$$
f(m) = (m-1)^2 + \tfrac12(n-m+1)(n-m)
$$
It is straightforward to show that $f(m)$ can indeed be obtained as 
a synchronization length. Since $f$ is a convex function, its maximum
is obtained at $m = 1$ or $m = n - 1$.
\end{proof}

In figure \ref{om}, we count transitive minimal synchronizing DFAs 
up to reordering states for less than $7$ states.

\begin{figure}[p] 
\rotatebox{90}{\begin{minipage}{\textheight}\small
\begin{center}
  \renewcommand{\arraystretch}{1.4}
  \begin{tabular}{|r|rrrr|r|}
    \hline
    alph.\@ & sync.\@ & sync.\@ & sync.\@ & sync.\@ & total \\[-5pt]
    size & 4 & 3 & 2 & 1 & \\
    \hline
    1 & & & & & 0 \\[-5pt]
    2 & 2 & 3 & 3 & & 8 \\[-5pt]
    3 & 2 & & & & 2 \\
    \hline
    total & 4 & 3 & 3 & 0 & 10 \\
    \hline
  \end{tabular}
  ~~
  \renewcommand{\arraystretch}{1.4}
  \begin{tabular}{|r|rrrrrrrrr|r|}
    \hline
    alph.\@ & sync.\@ & sync.\@ & sync.\@ & sync.\@ & sync.\@ & sync.\@ & sync.\@ & sync.\@ & sync.\@ & total \\[-5pt]
    size & 9 & 8 & 7 & 6 & 5 & 4 & 3 & 2 & 1 & \\
    \hline
    1 & & & & & & & & & & 0 \\[-5pt]
    2 & 2 & 5 & 11 & 20 & 49 & 52 & 57 & 18 & & 214 \\[-5pt]
    3 & 2 & 19 & 50 & 113 & 114 & 188 & 84 & & & 570 \\[-5pt]
    4 & & 2 & 5 & 5 & 4 & & & & & 16 \\
    \hline
    total & 4 & 26 & 66 & 138 & 167 & 240 & 141 & 18 & 0 & 800 \\
    \hline
  \end{tabular}
\end{center}
\begin{center}
  \renewcommand{\arraystretch}{1.4}
  \begin{tabular}{|r|rrrrrrrrrrrrrrrr|r|}
    \hline
    alph.\@ & sync.\@ & sync.\@ & sync.\@ & sync.\@ & sync.\@ & sync.\@ & sync.\@ & sync.\@ & sync.\@ & sync.\@ & sync.\@ & sync.\@ & sync.\@ & sync.\@ & sync.\@ & sync.\@ & total \\[-5pt]
    size & 16 & 15 & 14 & 13 & 12 & 11 & 10 & 9 & 8 & 7 & 6 & 5 & 4 & 3 & 2 & 1 & \\
    \hline
    1 & & & & & & & & & & & & & & & & & 0 \\[-5pt]
    2 & 1 & 4 & 11 & 23 & 43 & 46 & 139 & 224 & 380 & 622 & 986 & 1514 & 1547 & 893 & 99 & & 6532 \\[-5pt]
    3 & 1 & 8 & 31 & 89 & 448 & 841 & 1833 & 3892 & 7461 & 13471 & 23144 & 30931 & 27044 & 8344 & & & 117538 \\[-5pt]
    4 & & 1 & 4 & 42 & 173 & 404 & 926 & 1944 & 3560 & 6619 & 10274 & 12066 & 3710 & & & & 39723 \\[-5pt]
    5 & & & & 2 & 7 & 18 & 19 & 178 & 58 & 33 & 21 & & & & & & 336 \\
    \hline
    total & 2 & 13 & 46 & 156 & 671 & 1309 & 2917 & 6238 & 11459 & 20745 & 34425 & 44511 & 32301 & 9237 & 99 & 0 & 164129 \\
    \hline
  \end{tabular}
\end{center}
\begin{center}
  \renewcommand{\arraystretch}{1.4}
  \begin{tabular}{|r|rrrrrrrrrrrrrrrr|}
    \hline
    alph.\@ & sync.\@ & sync.\@ & sync.\@ & sync.\@ & sync.\@ & sync.\@ & sync.\@ & sync.\@ & sync.\@ & sync.\@ & sync.\@ & sync.\@ & sync.\@ & sync.\@ & sync.\@ & sync.\@ \\[-5pt]
    size & 25 & 24 & 23 & 22 & 21 & 20 & 19 & 18 & 17 & 16 & 15 & 14 & 13 & 12 & 11 & 10 \\
    \hline
    1 & & & & & & & & & & & & & & & & \\[-5pt]
    2 & 2 & & 2 & 11 & 22 & 45 & 61 & 112 & 201 & 322 & 528 & 954 & 1761 & 2540 & 4077 & 6341 \\[-5pt]
    3 & & & 2 & 35 & 126 & 285 & 568 & 1355 & 4801 & 12092 & 20636 & 44871 & 92738 & 174948 & 312377 & 584993 \\[-5pt]
    4 & & & & 7 & 57 & 153 & 347 & 1319 & 5789 & 16414 & 38463 & 98340 & 209987 & 411502 & 855834 & 1658196 \\[-5pt]
    5 & & & & 1 & 4 & 10 & 41 & 285 & 1035 & 2895 & 11428 & 41010 & 96178 & 179536 & 827097 & 1169501 \\[-5pt]
    6 & & & & & & & 2 & 11 & 26 & 42 & 1052 & 2925 & 1128 & 215 & 298427 & 33953 \\
    \hline
    total & 2 & 0 & 4 & 54 & 209 & 493 & 1019 & 3082 & 11852 & 31765 & 72107 & 188100 & 401792 & 768741 & 2297812 & 3452984 \\
    \hline
  \end{tabular}
\end{center}
\caption{The number of (slowly) synchronizing transitive minimal DFAs with $3$ to $6$ states, 
up to reordering states.} \label{om}
\end{minipage}}
\end{figure}

A conjecture of \^{A}ngela Cardoso asserts that the maximum subset synchronization 
lengths of the Cerny automata are the best possible for \emph{synchronizing} 
DFAs, see \cite{KKS16}. The maximum synchronization length of the Cerny automaton
with $n$ states are
$$
(n-1)^2 \Big(\Big\lceil \frac{n}{|S|} \Big\rceil - 1 \Big) 
\Big( 2n - |S| \Big\lceil \frac{n}{|S|} \Big\rceil - 1 \Big) 
$$
for a subset $S$. For nonsynchronizing DFAs, subset synchronization 
lengths can be exponential in the number of states. 

\begin{figure}[t]
\begin{center}
  \renewcommand{\arraystretch}{1.4}
  \begin{tabular}{|r|r!{\color{lightgray}\vrule}rr!{\color{lightgray}\vrule}rrr|}
    \hline
     & $n = 3$ & $n = 4$ & $n = 4$ & $n = 5$ & $n = 5$ & $n = 5$ \\[-5pt]
     & $|S| = 2$ & $|S| = 2$ & $|S| = 3$ & $|S| = 2$ & $|S| = 3$ & $|S| = 4$ \\
    \arrayrulecolor{lightgray}\cline{1-1}\cline{2-2}\cline{3-4}\cline{5-7}\arrayrulecolor{black}      
    alph.\@ & sync.\@ & sync.\@ & sync.\@ & sync.\@ & sync.\@ & sync.\@ \\[-5pt]
    size & 3 & 6 & 8 & 10 & 13 & 15 \\
    \hline
    1 & & & & & & \\[-5pt]
    2 & 6 (5) & 3 (3) & 3 (3) & 1 (1) & 2 (2) & 1 (1) \\[-5pt]
    3 & 23 (2) & 10 (4) & 11 (4) & 1 (1) & 2 (2) & 2 (2) \\[-5pt]
    4 & 30 (0) & 9 (0) & 13 (0) & & 1 (0) & 1 (0) \\[-5pt]
    5 & 20 (0) & 5 (0) & 6 (0) & & & \\[-5pt]
    6 & 7 (0) & 1 (0) & 1 (0) & & & \\[-5pt]
    7 & 1 (0) & & & & & \\
    \hline
    total & 87 (7) & 28 (7) & 34 (7) & 2 (2) & 5 (4) & 4 (3) \\
    \hline
  \end{tabular}
\end{center}
\caption{Number of DFAs with the largest subset synchronization lengths,
up to reordering states.} \label{os}
\end{figure}

We verified Cardoso's conjecture for DFAs up to $7$ states. 
In figure \ref{os}, the number of (transitive minimal) basic DFAs with $n$ states 
in which it takes the maximum number of steps to synchronize a subset of
size $|S|$, up to reordering states, is given for 
$2 \le |S| < n \le 5$.  

For $6$ states, the only basic DFAs which require the maximum number of steps
to synchronize subsets are the Cerny automaton with $6$ states and the 
Kari automaton, the latter of which for $|S| \ge 4$ only. For $7$ states, 
the only DFA which requires the maximum number of steps to synchronize 
subsets is the Cerny automaton. So it seems plausible
that for $n \ge 7$ states, the Cerny automaton is the only automaton 
which reaches the Cardoso bound.

\section{Maximal and semi-minimal synchronizing DFAs} \label{max}

In \cite{BDZ16}, we counted the number of basic synchronizing DFAs for up to $6$ states 
and large synchronization lengths. We reduced the synchronization lengths until the number
of basic synchronizing DFAs became too large.

To deal better with finding many synchronizing DFAs, we made two improvements to 
the search algorithm. In the search algorithm, the candidate symbols for extension 
are sorted in order of increasing number of synchronizing pairs. But this does not 
do anything if the DFA of the symbols that we have already chosen is synchronizing.
The first improvement is to sort the symbols as well if the DFA of the symbols that we 
have already chosen is synchronizing. The candidate symbols for extension are sorted
in order of increasing synchronization length.

The second improvement deals with the symmetry reduction of the synchronizing DFAs
which are found be the algorithm. The algorithm itself performs symmetry reduction
as an optimization, but this symmetry reduction is not perfect. But we need perfect 
symmetry reduction for for finding canonical representations to be stored and 
counting. This is done by applying all $n!$ symmetries on all symbols on candidate 
new synchronizing DFAs, where $n$ is the number of states. But applying symmetries
on symbols takes some time. A lookup table for the symmetry applications would
require $n! \cdot n^n$ entries for $n$ states, which makes it too large. 
For that reason, we ordered the symmetries with the Johnson-Trotter algorithm 
for each $n$, reducing the size of the lookup table to only $(n-1) \cdot n^n$ 
entries for $n$ states.

But these improvements do not solve the problem that there are too many synchronizing 
DFAs. To deal with that problem, we only searched for DFAs with additional properties
for smaller synchronization bounds. A synchronizing basic DFA is
\begin{itemize}

\item \emph{minimal}, if it becomes nonsynchronizing after removing any symbol. 

\item \emph{semi-minimal}, if its synchronization length increases or it becomes 
nonsynchronizing after removing any symbol. 

\item \emph{maximal}, if its synchronization length decreases after adding any 
new symbol.

\end{itemize}
Here, a symbol is new if it acts differently on the set of states.

For these types of DFAs, the number of DFAs appeared not to be very large even for
smaller synchronization lengths. We counted the different types of synchronizing
DFAs by testing found DFAs on having the type. With this, we kept track of symbols
for the test for maximality, because testing all symbols takes very long.
But we also optimized the search process. With the minimal DFAs, 
we did not search through synchronizing DFAs,
because extensions of synchronizing DFAs are not minimal. 

With the semi-minimal, maximal, and combined types, the collection of found DFAs
is moved to another place in the code, namely to the new procedure described
above, which sorts the symbols if the DFA is already synchronizing. 
The synchronizing DFA itself is collected as a candidate for a semi-minimal
DFA. The candidate maximal DFA is made by saturating the synchronizing DFA 
with the sorted symbols, in such a way that the synchronization length is not 
affected. Next, the search process is continued, but extensions within the
saturated DFA are skipped.

In the tables below, we do not give the number of DFAs, but we gives ranges of 
possible alphabet sizes, for minimal, semi-minimal, maximal, maximal minimal, and
maximal semi-minimal DFAs with a specific state set and synchronization length.
The number of DFAs for each alphabet size in such a range can be found with the 
source code. Ranges are given for synchronizing DFAs which do not need to be 
transitive, but we verified that the corresponding ranges for transitive DFAs 
can be obtained by removing $1$ (if present).

The ranges for general synchronizing basic DFAs were found as follows. Suppose that
$\mathcal{B}$ is a \emph{maximum} DFA, i.e.\@ a maximal DFA with the largest
possible alphabet size. By removing symbols of $\mathcal{B}$, we can obtain a 
semi-minimal DFA $\mathcal{A}$ with the same synchronization length as $\mathcal{B}$.
Consequently, to conclude that the range for general synchronizing basic DFAs
with $\mathcal{A}$ and $\mathcal{B}$ is continuous, it suffices to verify that the
the range of semi-minimal synchronizing DFAs with $\mathcal{A}$ is continuous.

But this does not work for the ranges of transitive general synchronizing basic DFAs,
because $\mathcal{A}$ may be not transitive. However, for the actual maximum DFAs 
$\mathcal{B}$ which were printed by the search algorithm, it appeared that
it was possible to make $\mathcal{A}$ transitive by restoring one symbol of
$\mathcal{B}$. So the ranges for general basic DFAs can be deduced from the maximal 
and semi-minimal ranges, and the corresponding transitive ranges can be obtained by 
removing $1$ (if present), just as for the other ranges.

In figure \ref{234}, we give the results for up to $4$ states. We were able to
get through down to synchronization length $1$.

\begin{figure}
\begin{center}
  \renewcommand{\arraystretch}{1.4}
  \begin{tabular}{|c|ccc!{\color{lightgray}\vrule}ccc|}
    \hline
    sync.\@ & & & & max & max & max \\[-5pt]
    length & \hphantom{min} & min & smin & \hphantom{min} & min & smin \\
    \hline
    1 & 1--3 & 1 & 1 & 3 &  &  \\
    \hline
    all & 1--3 & 1 & 1 & 3 &  &  \\
    \hline
  \end{tabular}
\end{center}
\begin{center}
  \renewcommand{\arraystretch}{1.4}
  \begin{tabular}{|c|ccc!{\color{lightgray}\vrule}ccc|}
    \hline
    sync.\@ & & & & max & max & max \\[-5pt]
    length & \hphantom{min} & min & smin & \hphantom{min} & min & smin \\
    \hline
    4 & 2--5 & 2--3 & 2--3 & 5 &  &  \\[-5pt] 
    3 & 2--9 & 2 & 2 & 9 &  &  \\[-5pt]
    2 & 1--23 & 1--2 & 1--2 & 23 &  &  \\[-5pt]
    1 & 1--26 & 1 & 1 & 26 &  &  \\
    \hline
    all & 1--26 & 1--3 & 1--3 & 5,~~9,~~23,~~26 &  &  \\
    \hline
  \end{tabular}
\end{center}
\begin{center}
  \renewcommand{\arraystretch}{1.4}
  \begin{tabular}{|c|ccc!{\color{lightgray}\vrule}ccc|}
    \hline
    sync.\@ & & & & max & max & max \\[-5pt]
    length & \hphantom{min} & min & smin & \hphantom{min} & min & smin \\
    \hline
    9 & 2--5 & 2--3 & 2--3 & 2--3,~~5 & 2--3 & 2--3 \\[-5pt] 
    8 & 2--8 & 2--4 & 2--4 & 4--8 &  &  \\[-5pt]
    7 & 2--17 & 2--4 & 2--4 & 2,~~4--9,~~17 & 2 & 2 \\[-5pt]
    6 & 2--17 & 2--4 & 2--4 & 5--11,~~13--15,~~17 &  &  \\[-5pt]
    5 & 2--41 & 2--4 & 2--5 & 11--25,~~35,~~41 &  &  \\[-5pt]
    4 & 2--59 & 2--3 & 2--4 & 
        23, 25, 27, \dots, 51, 53,~~59
    &  &  \\[-5pt]
    3 & 1--167 & 1--3 & 1--3 & \begin{tabular}{@{}c@{}}
        79,~~83,~~91,~~101,~~103, \\[-8pt] 119,~~123,~~127,~~147,~~167
    \end{tabular} &  &  \\[-5pt]
    2 & 1--251 & 1--2 & 1--2 & 251 &  &  \\[-5pt]
    1 & 1--255 & 1 & 1 & 255 &  &  \\
    \hline
    all & 1--255 & 1--4 & 1--5 & \begin{tabular}{@{}c@{}}
        2--25,~~27, 29, 31, \dots, 51, 53, \\[-8pt]
        59,~~79,~~83,~~91,~~101,~~103, \\[-8pt]
        119,~~123,~~127,~~147,~~167, \\[-8pt]
        251,~~255 
    \end{tabular} & 2--3 & 2--3 \\
    \hline
  \end{tabular}
\end{center}
\caption{Alphabet size ranges for synchronizing basic DFAs with $2$, $3$ and $4$ states.} \label{234}
\end{figure}

\begin{figure}[p] 
\begin{center}
  \renewcommand{\arraystretch}{1.4}
  \begin{tabular}{|c|ccc!{\color{lightgray}\vrule}ccc|}
    \hline
    sync.\@ & & & & max & max & max \\[-5pt]
    length & \hphantom{min} & min & smin & \hphantom{min} & min & smin \\
    \hline
    16 & 2--3 & 2--3 & 2--3 & 2--3 & 2--3 & 2--3 \\[-5pt]
    15 & 2--6 & 2--4 & 2--4 & 2--6 & 2--3 & 2--3 \\[-5pt]
    14 & 2--13 & 2--4 & 2--4 & 2--8,~~13 & 2--3 & 2--3 \\[-5pt]
    13 & 2--15 & 2--5 & 2--5 & 2--10,~~12--13,~~15 & 2--4 & 2--4 \\[-5pt]
    12 & 2--23 & 2--5 & 2--5 & 2--17,~~19--21,~~23 & 2--3 & 2--3 \\[-5pt]
    11 & 2--29 & 2--5 & 2--6 & 2--25,~~27,~~29 & 2--4 & 2--4 \\[-5pt]
    10 & 2--71 & 2--5 & 2--6 & 2--27,~~29,~~31,~~71 & 2--3 & 2--3 \\[-5pt]
    9 & 2--71 & 2--5 & 2--7 & \begin{tabular}{@{}c@{}}
        2--41,~~43--47,~~49--51, \\[-8pt] 53,~~55,~~57,~~59,~~71 
    \end{tabular} & 2--3 & 2--4 \\[-5pt]
    8 & 2--89 & 2--5 & 2--7 & \begin{tabular}{@{}c@{}}
        3--57,~~59--71,~~73--75, \\[-8pt] 77,~~83,~~89 
    \end{tabular} &  & 3 \\[-5pt]
    7 & 2--215 & 2--5 & 2--7 & \begin{tabular}{@{}c@{}}
        4--85,~~87--89,~~91--99, \\[-8pt] 101,~~105,~~167,~~215 
    \end{tabular} &  &  \\[-5pt]
    6 & ? & 2--5 & 2--6 & ? &  & ? \\[-5pt]
    5 & ? & 2--4 & 2--5 & ? &  & ? \\[-5pt]
    4 & ? & 1--4 & 1--4 & ? &  & ? \\[-5pt]
    3 & ? & 1--3 & 1--3 & ? &  & ? \\[-5pt]
    2 & 1--3119 & 1--2 & 1--2 & 3119 &  &  \\[-5pt]
    1 & 1--3124 & 1 & 1 & 3124 &  &  \\
    \hline
    all & 1--3124 & 1--5 & 1--7 & ? & 2--4 & ? \\
    \hline
  \end{tabular}
\end{center}
\begin{center}
  \renewcommand{\arraystretch}{1.4}
  \begin{tabular}{|c|ccc!{\color{lightgray}\vrule}ccc|}
    \hline
    sync.\@ & & & & max & max & max \\[-5pt]
    length & \hphantom{min} & min & smin & \hphantom{min} & min & smin \\
    \hline
    25 & 2 & 2 & 2 & 2 & 2 & 2 \\[-5pt]
    24 &  &  &  &  &  &  \\[-5pt]
    23 & 2--3 & 2--3 & 2--3 & 2--3 & 2--3 & 2--3 \\[-5pt]
    22 & 2--11 & 2--5 & 2--5 & 2--7,~~10--11 & 2--3 & 2--4 \\[-5pt]
    21 & 2--15 & 2--5 & 2--5 & 2--15 & 2--4 & 2--4 \\[-5pt]
    20 & 2--21 & 2--5 & 2--6 & 2--17,~~19,~~21 & 2--4 & 2--4 \\[-5pt]
    19 & 2--47 & 2--6 & 2--6 & 2--17,~~19,~~25,~~27,~~47 & 2--3 & 2--5 \\[-5pt]
    18 & 2--53 & 2--6 & 2--6 & 2--25,~~47,~~53 & 2--4 & 2--5 \\[-5pt]
    17 & 2--59 & 2--6 & 2--7 & \begin{tabular}{@{}c@{}}
        2--29,~~31--33,~~35,~~37, \\[-8pt] 39,~~41,~~43,~~45,~~59 
    \end{tabular} & 2--4 & 2--5 \\[-5pt]
    16 & 2--95 & 2--6 & 2--7 & \begin{tabular}{@{}c@{}}
        2--41,~~43,~~45,~~47,~~49, \\[-8pt] 51,~~53,~~59,~~61,~~65, \\[-8pt] 77,~~79,~~83,~~89,~~95 
    \end{tabular} & 2--4 & 2--6 \\[-5pt]
    15 & 2--101 & 2--6 & 2--8 & 2--71,~~75,~~77,~~80--85,~~101 & 2--4 & 2--5 \\[-5pt]
    14 & 2--143 & 2--6 & 2--9 & \begin{tabular}{@{}c@{}}
        2--93,~~95--105,~~107,\\[-8pt] 113,~~115,~~119,~~123,~~125,\\[-8pt] 127,~~131,~~137,~~143 
    \end{tabular} & 2--5 & 2--5 \\[-5pt]
    13 & ? & 2--6 & ? & ? & 2--4 & ? \\[-5pt]
    12 & ? & 2--6 & ? & ? & ? & ? \\[-5pt]
    11 & ? & 2--6 & ? & ? & ? & ? \\[-5pt]
    10 & ? & 2--6 & ? & ? & ? & ? \\
    \hline
  \end{tabular}
\end{center}
\caption{Alphabet size ranges for synchronizing basic DFAs with $5$ and $6$ states.} \label{56}
\end{figure}

For $5$ states, we were able to get through only for minimal DFAs. For $6$ states,
we were not able to get through at all. The results are given in figure \ref{56}.
Notice that some additional ranges are given in the table for $5$ states as well.
The lines for synchronization lengths $1$ and $2$ were obtained by reasoning. 
This reasoning can be generalized to any number of states. The
maximal minimal ranges were obtained by testing minimal DFAs for
maximality, which was done by an algorithm to test the procedure of
keeping track of the symbols for the test for maximality 
(not included in the source code).

Finally, we describe how we found the ranges for semi-minimal synchronizing
DFAs with $5$ states. Notice first that these ranges contain the corresponding 
ranges for minimal DFAs, that non-minimal semi-minimal synchronizing DFAs have
at least $2$ symbols, and that the number of symbols of a semi-minimal 
synchronizing DFAs does not exceed its synchronization length. This yields
the validity of the ranges for synchronization length $\le 4$. Although the 
algorithm did not complete synchronization length $6$, it did find 
semi-minimal synchronizing DFA with synchronization length $6$ and up to
$6$ symbols. This yields the validity of the range for synchronization length $6$.
To complete the range for synchronization length $5$, we need a construction 
with $5$ symbols, which is given below, where self-transitions are omitted.
\begin{center}
\begin{tikzpicture}
\useasboundingbox (-0.25,-1.25) rectangle (7.15,1.38);
\tikzstyle{nodestyle}=[draw,fill=white,circle,inner sep=0pt,minimum width=0.5cm]
\node[nodestyle] (0) at (0,0) {};
\node[nodestyle] (1) at (2,0) {};
\node[nodestyle] (2) at (4,0) {};
\node[nodestyle,fill=lightgray] (3) at (6,1) {};
\node[nodestyle,fill=lightgray] (4) at (6,-1) {};
\coordinate (3d) at (5,1);
\node[anchor=south,inner sep=2pt] at (3d) {$d$};
\draw[out=90,in=180] (0) edge (3d) (1) edge (3d) (2) edge (3d);
\coordinate (4e) at (5,-1);
\node[anchor=north,inner sep=2pt] at (4e) {$e$};
\draw[out=-90,in=-180] (1) edge (4e) (2) edge (4e);
\draw (3d) edge[->] (3) (4e) edge[->] (4);
\draw[out=-120,in=0] (3) edge[->] node[auto,inner sep=1pt] {$c$} (2);
\draw[out=-60,in=60] (3) edge[->] node[left,inner sep=2.5pt] {$e$} (4);
\draw[out=0,in=0,looseness=1.5] (4) edge[->] node[left,inner sep=2pt] {$e$} (3);
\draw (2) edge[->] node[above,inner sep=2pt] {$b$} (1) 
      (1) edge[->] node[above,inner sep=2pt] {$a$} (0);  
\end{tikzpicture}
\end{center}
The shaded pair of states requires $5$ steps to synchronize, and the other 
states synchronize as well. The construction can be generalized to $n \ge 4$ 
states, with $n$ steps and $n$ symbols (the construction is not 
semi-minimal for $3$ states).

Synchronization length $13$ is only included in the table for minimal synchronizing 
DFAs with $6$ states. But we think the maximum DFA with $6$ states and synchronization length 
$13$ has $359$ symbols. More generally, we think the maximum DFA with $n$ states and 
synchronization length $3n-5$ has $3 \cdot (n-1)! - 1$ symbols. 

This number of symbols is indeed obtainable. Take a state set $Q$ of size $n$,
with distinct states $q$ and $q'$. We include (i) all $(n-1)!$ symbols which send $Q$ 
to $Q$ and $q$ to $q$, except the identity symbol, (ii) all $(n-1)!$ symbols which send 
$Q$ to $Q$ and $q$ to $q'$, and (iii) all $(n-1)!$ symbols which send $Q$ to 
$Q \setminus \{q\}$ and which send $q$ and $q'$ to the same state.

\begin{figure}[b]
\begin{center}
  \renewcommand{\arraystretch}{1.4}
  \begin{tabular}{|c|ccc!{\color{lightgray}\vrule}ccc|}
    \hline
    sync.\@ & & & & max & max & max \\[-5pt]
    length & \hphantom{min} & min & smin & \hphantom{min} & min & smin \\
    \hline
    36 & 2 & 2 & 2 & 2 & 2 & 2 \\[-5pt]
    35 &  &  &  &  &  &  \\[-5pt]
    34 &  &  &  &  &  &  \\[-5pt]
    33 &  &  &  &  &  &  \\[-5pt]
    32 & 2--3 & 2 & 2 & 3 &  &  \\[-5pt]
    31 & 2--4 & 2--3 & 2--3 & 2--4 & 2--3 & 2--3 \\[-5pt]
    30 & 2--11 & 2--6 & 2--6 & 2--6,~~8--11 & 2--3 & 2--4 \\[-5pt]
    29 & 2--21 & 2--6 & 2--6 & 2--15,~~17,~~21 & 2--4 & 2--4 \\
    \hline
  \end{tabular}
\end{center}
\end{figure}

For $7$ states, the idea was to start a search process to find all maximal
and semi-minimal DFAs with synchronization length at least $27$. A sample
of $5$ percent of this computation on a heterogeneous cluster indicated that
this takes about $45$ CPU years on that cluster (of which $2$ years are already
completed by the sample). But we did not get the time to do the whole computation.
For that reason, I wrote a program to extract the maximal and semi-minimal DFAs
with synchronization length at least $29$ from all basic DFAs with synchronization 
length at least $29$. The selection of the maximal DFAs requires two passes. 
In the first pass, non-maximal DFAs are collected, by testing DFAs with one symbol 
removed to have the same synchronization length, for each DFA and each of its symbols.

Below are the alphabet size ranges for subset synchronization. 
\begin{center}
  \renewcommand{\arraystretch}{1.4}
  \begin{tabular}{|c!{\color{lightgray}\vrule}c!{\color{lightgray}\vrule}c|ccc!{\color{lightgray}\vrule}ccc|}
    \hline
    & & sync.\@ & & & & max & max & max \\[-5pt]
    $n$ & $|S|$ & length & \hphantom{min} & min & smin & \hphantom{min} & min & smin \\
    \hline
    2 & 2 & 1 & 1--3 & 1 & 1 & 3 &  &  \\
    \arrayrulecolor{lightgray}\cline{1-3}\cline{4-9}\arrayrulecolor{black} 
    3 & 2 & 3 & 2--7 & 2--3 & 2--3 & 7 &  &  \\[-5pt]
    3 & 3 & 4 & 2--5 & 2--3 & 2--3 & 5 &  &  \\
    \arrayrulecolor{lightgray}\cline{1-3}\cline{4-9}\arrayrulecolor{black} 
    4 & 2 & 6 & 2--6 & 2--3 & 2--3 & 3,~~6 & 3 & 3 \\[-5pt]
    4 & 3 & 8 & 2--6 & 2--3 & 2--3 & 2--3,~~5--6 & 2--3 & 2--3 \\[-5pt]
    4 & 4 & 9 & 2--5 & 2--3 & 2--3 & 2--3,~~5 & 2--3 & 2--3 \\
    \arrayrulecolor{lightgray}\cline{1-3}\cline{4-9}\arrayrulecolor{black} 
    5 & 2 & 10 & 2--3 & 2--3 & 2--3 & 2--3 & 2--3 & 2--3 \vspace*{-5pt} \\
    5 & 3 & 13 & 2--4 & 2--3 & 2--3 & 2,~~4 & 2 & 2 \\[-5pt]
    5 & 4 & 15 & 2--4 & 2--3 & 2--3 & 2,~~4 & 2 & 2 \\[-5pt]
    5 & 5 & 16 & 2--3 & 2--3 & 2--3 & 2--3 & 2--3 & 2--3 \\
    \arrayrulecolor{lightgray}\cline{1-3}\cline{4-9}\arrayrulecolor{black} 
    6 & 2 & 15 & 2 & 2 & 2 & 2 & 2 & 2 \\[-5pt]
    6 & 3 & 20 & 2 & 2 & 2 & 2 & 2 & 2 \\[-5pt]
    6 & 4 & 22 & 2 & 2 & 2 & 2 & 2 & 2 \\[-5pt]
    6 & 5 & 24 & 2 & 2 & 2 & 2 & 2 & 2 \\[-5pt]
    6 & 6 & 25 & 2 & 2 & 2 & 2 & 2 & 2 \\
    \arrayrulecolor{lightgray}\cline{1-3}\cline{4-9}\arrayrulecolor{black} 
    7 & 2 & 21 & 2 & 2 & 2 & 2 & 2 & 2 \\[-5pt]
    7 & 3 & 28 & 2 & 2 & 2 & 2 & 2 & 2 \\[-5pt]
    7 & 4 & 31 & 2 & 2 & 2 & 2 & 2 & 2 \\[-5pt]
    7 & 5 & 33 & 2 & 2 & 2 & 2 & 2 & 2 \\[-5pt]
    7 & 6 & 35 & 2 & 2 & 2 & 2 & 2 & 2 \\[-5pt]
    7 & 7 & 36 & 2 & 2 & 2 & 2 & 2 & 2 \\
    \hline
  \end{tabular}
\end{center}

We can observe the following in the results.

\begin{conjecture}
Let $n \ge 3$.
\begin{enumerate}

\item[\upshape(i)] The maximum number of symbols of a minimal DFA with $n$ states is $n$.
This number of symbols is possible for minimal DFAs with $n$ states, if and only
if the synchronization length is at least $n+1$ and at most 
$\frac12 n^2 + \frac12 n - 2$.

\item[\upshape(ii)] The maximum number of symbols of a semi-minimal DFA with $n$ states is $2n-3$.
If $n \ge 4$, then this number of symbols is possible for semi-minimal DFA with $n$ 
states, if and only if the synchronization length is at least $2n-3$ and at most 
$\frac12 n^2 - \frac12 n - 1$.

\end{enumerate}
Furthermore, transitive constructions are possible.
\end{conjecture}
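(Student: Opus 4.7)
The plan is to prove each of parts (i) and (ii) by establishing the alphabet upper bound, constructing families that realize it at every synchronization length in the claimed interval, ruling out realization outside that interval, and arranging transitivity in the constructions.

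For the alphabet upper bound in (i), suppose $\mathcal{A}$ is a minimal synchronizing DFA on $Q$ with $|Q|=n$ and symbols $a_1,\ldots,a_k$. Since removing $a_i$ destroys synchronization, one obtains a non-trivial equivalence $\equiv_i$ on $Q$ preserved by every $a_j$ with $j\neq i$ but not by $a_i$ — concretely, the equivalence generated by the mergeability relation under $\{a_j : j \neq i\}$. The $\equiv_i$ are pairwise distinct. I would attempt to inject $\{\equiv_1,\ldots,\equiv_k\}$ into $Q$ by sending $\equiv_i$ to a state $q_i$ whose $a_i$-image lies in a different $\equiv_i$-class, showing that distinct indices yield distinct distinguished states. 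For (ii), each symbol of a semi-minimal DFA is either essential-for-synchronization (so bounded by $n$ via the argument for (i), applied to the sub-DFA of synchronization-essential symbols) or only length-critical. Each length-critical $a_i$ contributes a pair $(S_i, S_i\cdot a_i)$ on some shortest synchronizing word that cannot be replicated after removing $a_i$, and a careful accounting of how essential and length-critical witnesses share the $n$ available states should yield $2n-3$.

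For the lengths at which the maximum alphabet is achievable, explicit one-parameter constructions are needed. The $n$-symbol semi-minimal family displayed at the end of Section~\ref{max} realizes length $n$; iterating or composing similar constructions can provide examples at each length $\ell$ in the stated intervals. Transitivity comes for free whenever one of the symbols acts as a cyclic permutation, as is already the case for \v{C}ern\'y-like constructions. The opposite direction — showing that no example exists outside the interval — splits into a lower bound ($n$-symbol minimal DFAs cannot synchronize in $\le n$ steps, and $(2n-3)$-symbol semi-minimal DFAs cannot synchronize in $<2n-3$ steps) and an upper bound on synchronization length.

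The main obstacle is this upper bound on synchronization length: that a minimal DFA with $n$ symbols cannot synchronize in more than $\binom{n+1}{2}-2$ steps, and similarly $\binom{n}{2}-1$ for semi-minimal DFAs with $2n-3$ symbols. The expected mechanism is a structural rigidity statement — near-extremal synchronization length forces the transition structure close to the \v{C}ern\'y skeleton, which supports only two essentially different symbols — but making this quantitative without invoking the \v{C}ern\'y conjecture itself is delicate. It will likely require combining the Frankl--Pin bounds of Theorem~\ref{fpks} with a finer combinatorial analysis of power-automaton trajectories along near-optimal synchronizing words, exploiting the additional constraints imposed by having many essential or length-critical symbols.
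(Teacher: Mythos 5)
The statement you are addressing is presented in the paper as a \emph{conjecture}: the author gives no proof, only computational verification for $n\le 7$ plus explicit constructions realizing the claimed maximum alphabet sizes at the synchronization lengths in the stated intervals (the chain of merge- and interchange-symbols that is slid one step at a time, and the transitive variant obtained by attaching interchange states to the sink of the $n$-symbol semi-minimal example). So the only portion of your plan that can be compared with anything in the paper is the constructive direction, and there your sketch is substantially vaguer than what the paper does: ``iterating or composing similar constructions'' does not by itself produce an $n$-symbol \emph{minimal} DFA for every length between $n+1$ and $\frac12 n^2+\frac12 n-2$. The paper needs a concrete interpolation (moving the white states along the chain so that each move drops the length by exactly one, then replacing interchange symbols by merge symbols to cover the lengths down to $n+1$), together with an argument that minimality and the exact length are preserved at every step; none of that is supplied by your sketch.

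The remaining parts of your proposal have genuine gaps. For the bound $k\le n$ on the alphabet of a minimal DFA, the congruences $\equiv_i$ are indeed well defined (mergeability is a congruence in a complete DFA), but the proposed injection of $\{\equiv_1,\dots,\equiv_k\}$ into $Q$ is unsupported: you give no reason why the distinguished states $q_i$ should be pairwise distinct, and nothing in the structure of the $\equiv_i$ forces this, so the step ``hence $k\le n$'' does not follow. The accounting that is supposed to yield $2n-3$ in part (ii) is not even outlined. Finally, you yourself identify the upper bound on synchronization length (that an $n$-symbol minimal DFA cannot synchronize in more than $\frac12 n^2+\frac12 n-2$ steps, and the analogous bound for semi-minimal DFAs) as the main obstacle and offer only a hoped-for ``structural rigidity'' mechanism; this is precisely the open part, and the Frankl--Pin bounds of Theorem \ref{fpks} give upper bounds of cubic order in $n$ for arbitrary DFAs, far too weak to produce a quadratic bound tied to the alphabet size. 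As it stands, your text is a research plan for an open problem, not a proof, and it should not be presented as establishing the conjecture.
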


We show that transitive minimal DFAs with $n$ states and $n$ symbols as in (i) above indeed 
exist. Below on the left hand side, a construction is given for synchronization length
$\frac12 n^2 + \frac12 n - 2$. Here, a single arrow represents a symbol which merges
two states as indicated by the arrow, and preserves the other states. Furthermore,
a double arrow represents a symbol which interchanges two states and preserves the 
other states.
\begin{center}
\begin{tikzpicture}
\tikzstyle{nodestyle}=[draw,fill=white,circle,inner sep=0pt,minimum width=0.5cm]
\node[nodestyle,fill=lightgray] (0) at (-9,0) {};
\node[nodestyle,fill=lightgray] (1) at (-8,0) {};
\node[nodestyle,fill=lightgray] (2) at (-7,0) {};
\node[nodestyle] (3) at (-6,0) {};
\node[nodestyle] (4) at (-5,0) {};
\node[nodestyle] (5) at (-4,0) {};
\node[nodestyle] (6) at (-3,0) {};
\draw[<->] (0) edge[out=-15,in=-165] (1) (0) edge[->,out=15,in=165] (1) 
(1) edge (2) (2) edge (3) (3) edge (4) (5) edge (6);
\draw (-4.6,0) node {$\cdot$} (-4.5,0) node {$\cdot$} (-4.4,0) node {$\cdot$};
\node[nodestyle,fill=lightgray] (10) at (0,0) {};
\node[nodestyle,fill=lightgray] (11) at (1,0) {};
\node[nodestyle,fill=lightgray] (12) at (2,0) {};
\node[nodestyle] (13) at (115:1) {};
\node[nodestyle] (14) at (150:1) {};
\node[nodestyle] (15) at (-150:1) {};
\node[nodestyle] (16) at (-115:1) {};
\draw[<->] (10) edge[out=-15,in=-165] (11) (10) edge[->,out=15,in=165] (11) 
(11) edge (12) (13) edge (10) (14) edge (10) (15) edge (10) (16) edge (10);
\draw (175:1) node {$\cdot$} (180:1) node {$\cdot$} (185:1) node {$\cdot$};
\end{tikzpicture}
\end{center}
The white states can be moved to the left step by step, where each step
yields a DFA of which the synchronization length is one less than that of its 
predecessor. This process end with the DFA above on the right hand side,
which has synchronization length $2n - 2$. 
By replacing the double arrows which attach the white states by single arrows 
towards the leftmost shaded state, one can decrease the synchronization 
length further and obtain all remaining synchronization lengths down to 
$n+1$ inclusive. 

But that construction is not transitive. For a transitive construction, we
start with the semi-minimal DFA which we constructed before. This DFA is not
minimal, because symbol $d$ is not needed for synchronization: removing 
symbol $d$ yields a DFA with a sink state which synchronizes in $2n - 3$
steps. Below on the left hand side, we attached a new state with a double 
arrow to the sink state, which we marked with an $*$. We will show that 
this new DFA is minimal with synchronization length $n + 1$. 
\begin{center}
\begin{tikzpicture}
\tikzstyle{nodestyle}=[draw,fill=white,circle,inner sep=0pt,minimum width=0.5cm]
\useasboundingbox (-9.75,-1.16) rectangle (1.87,1.175);
\node[nodestyle] (0a) at (-9.5,0) {};
\node[nodestyle] (0) at (-8.5,0) {$*$};
\node[nodestyle] (1a) at (-7.5,0) {};
\node[nodestyle] (1) at (-6.5,0) {};
\node[nodestyle] (2a) at (-5.5,0) {};
\node[nodestyle] (2) at (-4.5,0) {};
\node[nodestyle,fill=lightgray] (3) at (-3.5,0.7) {};
\node[nodestyle,fill=lightgray] (4) at (-3.5,-0.7) {};
\draw (-6.1,0) node {$\cdot$} (-6.0,0) node {$\cdot$} (-5.9,0) node {$\cdot$};
\coordinate (3d) at (-4,0.7);
\node[anchor=south,inner sep=2pt] at (3d) {$d$};
\draw[overlay,out=90,in=180] (0) edge (3d) (1a) edge (3d) (1) edge (3d) (2a) edge (3d) (2) edge (3d);
\coordinate (4e) at (-4,-0.7);
\node[anchor=north,inner sep=2pt] at (4e) {$e$};
\draw[overlay,out=-90,in=-180] (1a) edge (4e) (1) edge (4e) (2a) edge (4e) (2) edge (4e);
\draw (3d) edge[->] (3) (4e) edge[->] (4) (0a) edge[<->] (0);
\draw[out=-120,in=0] (3) edge[->] (2);
\draw[out=-60,in=60] (3) edge[->] node[left,inner sep=2.5pt] {$e$} (4);
\draw[out=0,in=0,looseness=1.5] (4) edge[->] node[left,inner sep=2pt] {$e$} (3);
\draw[->] (2) edge (2a) (1) edge (1a) (1a) edge (0);
\node[nodestyle] (10) at (0,0) {$*$};
\node[nodestyle,fill=lightgray] (11) at (1,0.7) {};
\node[nodestyle,fill=lightgray] (12) at (1,-0.7) {};
\node[nodestyle] (13) at (115:1) {};
\node[nodestyle] (14) at (150:1) {};
\node[nodestyle] (15) at (-150:1) {};
\node[nodestyle] (16) at (-115:1) {};
\draw[<->] (13) edge (10) (14) edge (10) (15) edge (10) (16) edge (10);
\draw (175:1) node {$\cdot$} (180:1) node {$\cdot$} (185:1) node {$\cdot$};
\coordinate (11d) at (0.5,0.7);
\node[anchor=south,inner sep=2pt] at (11d) {$d$};
\draw[out=90,in=180] (10) edge (11d);
\draw (11d) edge[->] (11);
\draw[out=-120,in=0] (11) edge[->] (10);
\draw[out=-60,in=60] (11) edge[->] node[left,inner sep=2.5pt] {$e$} (12);
\draw[out=0,in=0,looseness=1.5] (12) edge[->] node[left,inner sep=2pt] {$e$} (11);
\end{tikzpicture}
\end{center}
One can attach more new states on state $*$ with double arrows, up to the DFA 
above on the right hand side. We will show that we obtain all synchronization
lengths from $n + 2$ up to $2n - 3$ inclusive this way.

Just as before, the objective is to merge the shaded pair of states. But there
is a second objective, namely to apply the interchange symbols. To make the first
application of the interchange symbols effective, they have to be preceded by another
interchange symbol or by symbol $d$, and we may assume the latter symbol to be the 
direct predecessor of the former. But a consecutive application of two interchange
symbols will not occur in a shortest synchronizing word. So the second objective is
that for each of the interchange symbols, there is an application which is
immediately after symbol $d$.

Let $k$ be the number of states which is attached to state $*$ with an interchange
symbol. To show that the length of the shortest synchronizing word is $n + k$,
we need a third objective, which is that the last symbol is not an interchange
symbol. This objective is justified because interchange symbols act as permutations
on the state set, and therefore cannot be the last symbol of a shortest synchronizing 
word. Each of the time, an application of symbol $d$ does not 
contribute to the merge of the shaded pair of states, and neither do
interchange symbols, except in the last step where the actual merge takes place 
by way of symbol $d$. This exception is compensated by the third
objective. It is also clear that a synchronizing word of length $n + k$ exists,
so we have all synchronization lengths from $n + 1$ up to $2n - 3$ inclusive.

\end{document}